\newtheorem{definition}{Definition}
\newcommand{\model}{FeSAIL~}
\newcommand{\modelns}{FeSAIL}
\newtheorem{example}{Example}
\newtheorem{theorem}{Theorem}
\newtheorem{lemma}{Lemma}
\title{Feature Staleness Aware Incremental Learning for CTR Prediction}
\author{
Zhikai Wang$^1$
\and
Yanyan Shen$^1$\and
Zibin Zhang$^2$\And
Kangyi Lin$^2$
\affiliations
$^1$Department of Computer Science and Engineering, Shanghai Jiao Tong University\\
$^2$Tencent
\emails
\{cloudcatcher.888,shenyy\}@sjtu.edu.cn,
\{bingoozhang,plancklin\}@tencent.com
}
\begin{document}

\maketitle

\begin{abstract}
Click-through Rate~(CTR) prediction in real-world recommender systems often deals with billions of user interactions every day. To improve the training efficiency, it is common to update the CTR prediction model incrementally using the new incremental data and a subset of historical data. 
However, the feature embeddings of a CTR prediction model often get stale when the corresponding features do not appear in current incremental data. In the next period, the model would have a performance degradation on samples containing stale features, which we call the feature staleness problem. 
To mitigate this problem, we propose a \textbf{Fe}ature \textbf{S}taleness \textbf{A}ware \textbf{I}ncremental \textbf{L}earning method for CTR prediction~(FeSAIL) which adaptively replays samples containing stale features.
We first introduce a staleness aware sampling algorithm (SAS) to sample a fixed number of stale samples with high sampling efficiency. We then introduce a staleness aware regularization mechanism (SAR) for a fine-grained control of the feature embedding updating. %Furthermore, we enhance SAS and SAR to handle staleness imbalance for CTR prediction with multi-categorial feature fields.
We instantiate FeSAIL with a general deep learning-based CTR prediction model and the experimental results demonstrate FeSAIL outperforms various state-of-the-art methods on four benchmark datasets.
The code can be found in https://github.com/cloudcatcher888/FeSAIL.
\end{abstract}

\section{Introduction}
Click-through Rate (CTR) prediction is to estimate the probability of a user clicking a recommended item in a specific context~\cite{ipinyou1,cheng,dien,din,bert4rec,tamic}. As real-world recommender systems often encounter billions of user interactions every day, various methods~\cite{reservior6,reservior5,imsr,ssr} have been proposed to update CTR prediction model in an incremental manner, which means that model will be fine-tuned with newly arrived data and possibly with part of historical data.

Typically, CTR prediction models~\cite{deepcrossing,deepfm,ader,asmg} involve a low-level feature embedding layer, followed by the high-level feature interaction and prediction layers. 
%With regard to the typical incremental CTR prediction methods , 
%there exists the catastrophic forgetting problem for incremental update, which limits the performance of RS models updated in this manner, which has been testified by many previous works\cite{man,igcn,,mn bsml}.
%And all of these method do not sample from the perspective of feature level. 
%they will use a base model~\cite{deepcrossing,deepfm} which can be divided into the high-level interaction and prediction layers and low-level feature embedding layer. 
The parameters in the high-level layers are always updated with incremental data. However in the low-level layer, the feature embeddings will not be updated and get stale when the corresponding features do not appear in the incremental data~\cite{incctr}.  
For instance, if samples with ``personal care products" feature only exist in previous time spans and missing in later time spans, the embedding of feature ``personal care products'' will be stale and become incompatible with the parameters in the high-level layers. When new samples with ``personal care products" feature appear in incremental datasets, the model may not predict the CTR on them precisely. %We found there is a performance degradation on next time span's data containing these stale features, which we call the feature staleness problem. 
We call this problem the feature staleness problem.
%The feature staleness problem is quite prevalent among the mainstream incremental learning methods for training CTR prediction models.

%As presented in Figure~\ref{illustration}, 

\begin{figure}[h]
\vspace{-0.14in}
	\centering
	\begin{subfigure}[b]{0.45\textwidth}
     \centering
	\includegraphics[width=\textwidth,height=5.5cm]{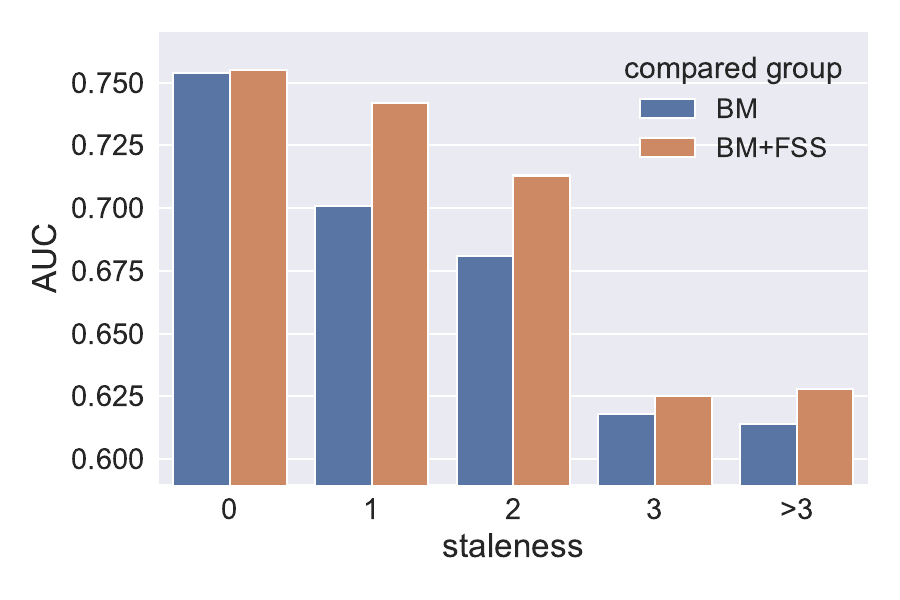}
		 \vspace{-0.24in}
%	\caption{Criteo}
	\end{subfigure}
%	\begin{subfigure}[b]{0.235\textwidth}
%     \centering
%	\includegraphics[width=\textwidth,height=3cm]{}
%		 \vspace{-0.24in}
%	\caption{Avazu}
%	\end{subfigure}
	%\vspace{-0.1in}
	\caption{The observed feature staleness problem on Avazu. }
	\label{illustration}
	%\vspace{-0.12in}
\end{figure}

\begin{example}
We use two real CTR prediction data Criteo and Avazu~\cite{incctr} as examples which both consist of a pretraining dataset and ten consecutive incremental datasets denoted as $\{\mathcal{D}_0,\mathcal{D}_1,\cdots,\mathcal{D}_{10}\}$. %, where the feature value set of each incremental dataset is $\mathcal{F}_{D_t}$. 
We choose a state-of-the-art incremental learning~(IL) method~\cite{asmg} as the base incremental method~(\textbf{BM}) for illustration. In each time span $t$, we fine-tune the model parameters with $\mathcal{D}_t$. We then test the \textbf{BM} on samples in $\mathcal{D}_{t+1}$ which contain features with different \textbf{staleness}. %$f\notin \mathcal{F}_{D_t}$,
The \textbf{staleness} is defined as the number of consecutive time spans for which a feature has been absent. We use the \textbf{staleness} to represent the out-of-date extent of a feature embedding. %The detailed definition of \textbf{staleness} can be found in Section~\ref{sas}.
The average AUC performance on ten incremental datasets is reported in Figure~\ref{illustration}.
We can see that the model reports a lower average AUC on samples containing features with larger staleness, which reflects that the model suffers from the feature staleness problem on stale features, especially the feature absent for a long time, during the testing phase.
\end{example}

A vanilla way to mitigate the feature staleness problem is to replay all the samples, which contain stale features in each incremental step, to ensure all the feature embeddings are updated, which is referred as \textbf{Full Stale Sampling~(FSS)}. \textbf{FSS} can serve as a replay plug-in for the base model. As presented in Figure~\ref{illustration}, \textbf{BM+FSS} can achieve remarkable AUC improvement on samples with stale features.
However, this vanilla way has two limitations. 
First, in practice, the distribution of features can be scattered and vary among all incremental datasets. The set of all historical samples containing stale features will have an unaffordable size, and randomly sampling a subset cannot guarantee a high coverage of stale features.
Second, it treats all stale features in the same way regardless of the extent of feature's staleness, which is not reasonable. For instance, repeatedly updating low-frequency features which keep stale for a long time will increase the risk of over-fitting~\cite{dcn}. On the contrary, some features are only stale for a few time spans, which are similar to unstale features, should be updated normally. The improvement on samples with smaller staleness features is more remarkable in Figure~\ref{illustration}, which validates the second limitation.%and indicates we should pay more attention to features with smaller staleness.}

In this paper, we propose a \textbf{Fe}ature \textbf{S}taleness \textbf{A}ware \textbf{I}ncremental \textbf{L}earning method for CTR prediction~(FeSAIL) to overcome the above mentioned limitations. 
We introduce a staleness aware sampling algorithm~(SAS) using a fixed-size reservoir to store samples with stale features. We aim to cover as many features with small staleness values as possible, which is formulated as a maximum weighted coverage problem and solved by a greedy algorithm with an approximation ratio of $1-\frac{1}{e}$ compared to the optimal solution. We then introduce a staleness aware regularization mechanism~(SAR) to restrict the updating of feature embeddings according to the extent of features' staleness.

The major contributions are summarized as follows.
\begin{itemize}[itemsep=0pt,topsep=0pt,parsep=0pt,leftmargin=10pt]
\item We observe the feature staleness problem existing in incremental learning of CTR prediction models and propose a feature staleness aware IL method named FeSAIL to mitigate this problem. 
\item We introduce the SAS algorithm based on feature staleness and solve it as a maximum weighted coverage problem greedily to guarantee the sampling efficiency. We also design the SAR algorithm for a fine-grained control of low-frequency feature updating. 
%\item Furthermore, we enhance SAS and SAR to handle staleness imbalance for CTR prediction with multi-categorial feature fields.
\item We instantiate FeSAIL with a general deep learning-based CTR prediction model and conduct extensive experiments, which demonstrates that FeSAIL can efficiently alleviate the feature staleness problem and achieve 1.21\% AUC improvement compared to the state-of-the-art IL methods for CTR prediction on three widely-used public datasets and a private dataset averagely.
\end{itemize}

\section{Preliminaries}
CTR prediction aims to estimate the probability that a user will click a recommended item.
Generally, CTR models include three parts: embedding layer, interaction layer and prediction layer~\cite{fpmc,deepfm,incctr,tien}.
In typical CTR prediction tasks, the input of each sample is collected in a multi-field form~\cite{incctr,dcn,fossil}. The label $y$ of each sample is 0 or 1 indicating whether a user clicked an item.
Each field $F_i~(1\le~i~\le~m)$ is filled with one specific feature $f_i$ from feature space $\mathcal{F}_i$ with size $n_i$, where $m$ is the number of fields.
Each data sample is transformed into a dense vector via an embedding layer.  
Formally, for a sample $d=\{f_1,f_2,\cdots,f_m\}$, each feature is encoded as an embedding vector $e_i~\in~\mathcal{R}^{1\times k}$, where $k$ is the embedding size. Therefore, each sample can be represented as an embedding list $E=(e_1^{\top},e_2^{\top},\cdots,e_m^{\top})\in \mathcal{R}^{m\times k}$. 
Let the total number of features be $N=\sum_{i=1}^m n_i$. The embeddings of all the features form an embedding table $\mathcal{E}\in \mathcal{R}^{N\times k}$.
In the interaction layer, existing models~\cite{deepfm,dcn,sasrec,tisasrec} utilize product operation and multi-layer perception to capture explicit and implicit feature interactions. And prediction $\hat{y}$ is generated as the probability that the user will click on a specific item. The cross-entropy loss is calculated between the label $y$ and the prediction $\hat{y}$.

Given a series of CTR prediction datasets $\{\mathcal{D}_0, \mathcal{D}_1, \cdots, \mathcal{D}_t, \cdots\}$ indexed by time span. In IL scenario, at time span $t$, $\mathcal{D}_t$ is the current dataset and $\mathcal{D}_{his}^t=\{\mathcal{D}_0,\mathcal{D}_1,\cdots,\mathcal{D}_{t-1}\}$ contains all the historical datasets. If we train the model using the current dataset and all the historical datasets, the size of the training data will grow overwhelmingly large. In many existing IL methods for CTR prediction~\cite{ader,man}, instead of preserving the whole historical datasets, they first initialize a reservoir $\mathcal{R}_0$ using $\mathcal{D}_0$, which serves as a replay buffer. As time span $t$, they generate the reservoir $\mathcal{R}_t$ based on $\mathcal{R}_{t-1} \cup \mathcal{D}_{t-1}$ and train the model using $\mathcal{R}_t$ and $\mathcal{D}_t$, which can greatly reduce the size of the training data. In this paper, we follow this setting and investigate how to find an efficient way for sampling $R_t$ from $\mathcal{R}_{t-1} \cup \mathcal{D}_{t-1}$ and update the model parameters using $\mathcal{R}_t$ and $\mathcal{D}_t$ such that the model can achieve better performance on $\mathcal{D}_{t+1}$ at each time span $t$.
%refer to it as the Reservoir-based Incremental Learning for CTR Prediction~(RIL), which is defined as follows.

%\begin{definition}[Reservior-based Incremental Learning for CTR Prediction]
%The aim of reservoir-based incremental learning for CTR prediction is to find an efficient way for sampling $R_t$ from $\mathcal{R}_{t-1} \cup \mathcal{D}_{t-1}$ and update the model parameters using $\mathcal{R}_t$ and $\mathcal{D}_t$ such that the model performs well on $\mathcal{D}_{t+1}$.
%\end{definition}

%\begin{table}
%\normalsize
% \caption{Notations}
% \vspace{-0.1in}
%
% \label{tab:notation}
% \begin{tabular}{ll}
%  \toprule
%  Notation & Description\\
%  \midrule
%  $y$ & label to indicate interaction \\
%  $F_i$ & the $i^{th}$ feature field \\
%  $\mathcal{F}_i$ & the feature space of the $i^{th}$ field \\
%  $f_i$ & the specific feature filled in the $i^{th}$ feature field \\
%  $e_i$ & the embedding of feature in the $i^{th}$ field \\
%  $E$ & the embedding list to represent one sample \\
%  $\mathcal{E}$ & the global embedding table\\
%  $\mathcal{D}_t$ & the dataset in time span $t$\\
%  $\mathcal{R}_t$ & the reservoir in time span $t$\\
%  \bottomrule
% \end{tabular}
% \vspace{-0.15in}
%\end{table}

\section{The FeSAIL Approach}
\subsection{Overview}
\begin{figure}[h]
	\centering
	\includegraphics[width=.49\textwidth,height=8.6cm]{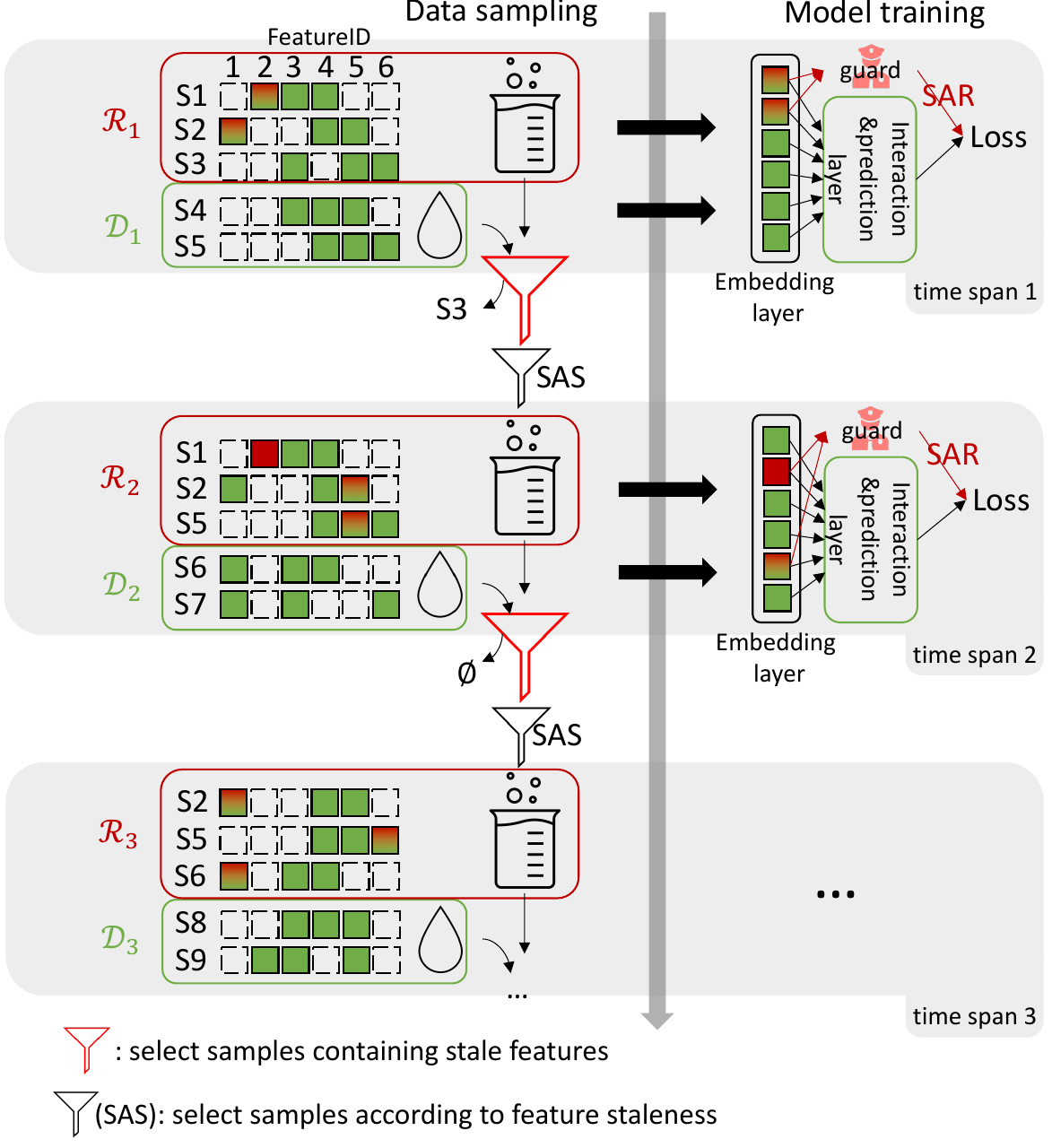}
	\vspace{-0.2in}
	\caption{The overview of \modelns. All features that appear in the current dataset are in green squares. The redness of a feature represents the extent of its staleness.}
	\label{overview}
	%\vspace{-0.1in}
\end{figure}
%\subsection{Vanilla sample method}
%In this section, we first introduce a straightforward way to choose the samples.
The FeSAIL is a feature staleness aware IL method for training CTR prediction models to mitigate the feature staleness problem. 
As presented in Figure~\ref{overview}, FeSAIL can be divided into sampling and training stages for each time span $t$: In the sampling stage, the staleness aware sampling~(SAS) strategy first uses a reservoir $\mathcal{R}_t$ with a fixed size to store samples containing stale features and uses a greedy algorithm to cover as much staleness features as possible. In the training stage, the model will be trained on $\mathcal{R}_t$ and the incremental dataset $\mathcal{D}_t$ with the staleness aware regularization~(SAR), which restricts the embedding updating of features according to the extent of features’ staleness.

%\subsection{Staleness}

\subsection{Staleness Aware Sampling~(SAS)}
\label{sas}
In this section, we introduce how our method chooses samples for the reservoir based on the feature staleness.
Our key insight is that resampling samples with stale features, which refer to the features not appearing in the current dataset, can help mitigate the performance degradation on new samples involving stale features, as illustrated Figure~\ref{illustration}. A na\"ive way to generate $\mathcal{R}_t$ is to choose the samples from $\mathcal{R}_{t-1} \cup \mathcal{D}_{t-1}$ containing features that do not appear in $\mathcal{D}_t$, which is referred as Reservoir-based Stale Sampling~(\textbf{RSS}). 
More formally, let $\mathcal{F}_{D_t}$ denote the feature set of $\mathcal{D}_t$. 
We add a sample $d=\{f_1,f_2,\cdots,f_m\}$ in $\mathcal{R}_{t-1} \cup \mathcal{D}_{t-1}$ to $\mathcal{R}_t$ if $d$ involves at least one feature that is not included in the feature set of $\mathcal{D}_t$,
%For each sample $d=\{f_1,f_2,\cdots,f_m\}$, it will be chosen if some features exist in the feature set, 
i.e., $\exists f_i \notin \mathcal{F}_{D_t}$ where $1~\le~i\le~m$.  
The model will be trained with $\mathcal{R}_t$ and $\mathcal{D}_t$ jointly.
In this way, the chosen samples in $\mathcal{R}_t$ can supplement the feature set of $\mathcal{D}_t$ and update the feature embeddings such that the embeddings can adapt to the updated high-level layers in the model.
%However, this vanilla way has two weaknesses. 
%First, the features in $\mathcal{R}_t$ and $\mathcal{D}_t$ are treated equally during training. And the newly missed features and the old features keeping stale for a long time are also treated equally, which is not reasonable.
%Second, in real scenarios, the distribution of features can be quite sparse and vary among each incremental dataset. Then the set of all historical samples containing old features will have an unbearable size and lead to untraceable computation cost.  

However, in real scenarios such as news and short video recommendation, the distribution of features can be quite scattered and vary among different incremental datasets~\cite{sml,asmg}. Even using the above mentioned method will still preserve samples with an uncontrollable size and incur high computation cost.  
A more desirable solution is to design a reservoir with a fixed size $L$ to store historical samples with stale features. 
Honestly, a fixed-sized reservoir may not be able to cover all the stale features. To mitigate the performance degeneration to the maximal extent, we compute the fixed-size reservoir based on two rules. 
First, the designed reservoir should cover as many stale features as possible. 
Second, the features with smaller staleness will be covered with higher priorities because they might have higher possibilities of reappearing again in future datasets.
We first quantify the feature staleness. Formally, each feature $f_i$ is assigned with one staleness $s_i^t$ which is initially set as 0. At time span $t$, 
the staleness of feature $f_i$ will be:
\begin{equation}
  s_i^t = 
  \begin{cases}
  s_i^{t-1}+1, & {\rm~if}~f_i~\notin~\mathcal{F}_{D_t}\\
  0, & {\rm~otherwise}
  \end{cases}
  ,
\end{equation}
where $\mathcal{F}_{D_t}$ denotes the set of feature values contained in $D_t$.
Then we define the \textit{weight} of each feature which has an inverse correlation with its \textit{staleness}:
\begin{equation}
\label{func}
w_i = func(s_i^t) + b,
\end{equation}
where $func(\cdot)$ is an inverse correlation function like an inverse proportional function or negative exponential function, and $b$ is a bias term.
Our goal is to select a fixed number of samples with as many stale features as possible, which can be formulated as follows.
\begin{definition}[Stale Features Sampling~(SFS)]
Given the reservoir $\mathcal{R}_t=\{d_1,d_2,\cdots,d_{|\mathcal{R}_t|}\}$ of \textbf{RSS} where each sample $d_l$ contains $m$ features $\{f_1,f_2,\cdots,f_m\}$ associated with weights $\{w_i\}_{i=1}^m$. The SFS problem is to find a collection of samples $\mathcal{R}_t^{fixed}\subseteq \mathcal{R}_t$, such that the total number of samples in $\mathcal{R}_t^{fixed}$ does not exceed a given capacity $L$ and the total weight of features covered by $\mathcal{R}_t^{fixed}$ is maximized.
\end{definition}

The above problem is a generalized version of the standard maximum coverage problem~(MCP)~\cite{mcp}, which has been proved that an exact solution is intractable because the search space is too large. We develop a greedy algorithm~(SAS) which can yield an approximation ratio of $1-\frac{1}{e}$ compared to the optimal solution.  

%Let $W_l$ be the total weight of the features covered by sample $d_l$. 
Let $W_l$ denote the total weight of the features covered by sample $d_l$, but not covered by any sample in $\mathcal{R}_t^{fixed}$. As presented in Algorithm~\ref{algsas}, SAS has $L$ iterations.
In each iteration, SAS will calculate $W_l$ for each sample in $\mathcal{R}_t$ and select the sample with the maximum $W_l$.

\begin{algorithm}
\normalsize
\caption{SAS algorithm}
\label{algsas}
\SetAlgoNoLine
\IncMargin{1em} % 使得行号不向外突出 
\SetKwInOut{Input}{\textbf{Input}}\SetKwInOut{Output}{\textbf{Output}} 
\Input{a reservoir from \textbf{RSS} $\mathcal{R}_t=\{d_1,d_2,\cdots,d_{|\mathcal{R}_t|}\}$,\\
	 where each sample $d_l=\{f_1,f_2,\cdots,f_m\}$ \\
	  associated weights $\{w_i\}_{i=1}^m$, \\
	a given capacity $L$}
\Output{a collection of samples $\mathcal{R}_t^{fixed}\subseteq \mathcal{R}_t$}
$\mathcal{R}_t^{fixed}\leftarrow\emptyset$\; %$U\leftarrow\mathcal{R}_t$\;
\For{$L$ iterations}{
		select $d_l\in \mathcal{R}_t$ that maximizes $W_l$\;
		%\If{$U\neq\emptyset$}{
		$\mathcal{R}_t^{fixed}\leftarrow\mathcal{R}_t^{fixed}+ d_l$\;
		%$U\leftarrow U\backslash d_l$\;
		%}
	}
\end{algorithm}

Let $b_l$ denote the total weight till $l^{th}$ iteration, i.e., $ b_l=\sum_{j=1}^lW_j$ and $OPT$ denote the optimal solution of the SFS. Let $c_l$ denote the left weight for optimal, i.e., $c_l=OPT-b_l$. Let $b_0=0$, $c_0=OPT$. 
\begin{lemma}
In $l+1^{th}$ iteration, we always have $W_{l+1}\ge\frac{c_l}{L}$, where $W_{l+1}$ is a possible total weight.
\end{lemma}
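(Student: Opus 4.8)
The plan is to run the standard averaging argument for greedy maximum coverage, adapted to the weighted-feature setting. First I would fix an optimal solution $\mathcal{O}=\{d_{o_1},\dots,d_{o_L}\}\subseteq\mathcal{R}_t$ of SFS, i.e.\ a collection of at most $L$ samples whose covered features have total weight $OPT$. Write $\mathrm{F}(d)$ for the set of features carried by a sample $d$, and for a feature set $S$ put $w(S)=\sum_{f_i\in S}w_i$; since every $w_i=func(s_i^t)+b$ is nonnegative, $w$ is monotone and subadditive over unions. Let $C_l$ be the set of features already covered after the first $l$ greedy picks, so that $b_l=w(C_l)$ and, for any sample $d$, its marginal gain when added next is exactly $w(\mathrm{F}(d)\setminus C_l)$. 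Note also that $C_l$ is covered by $l\le L$ samples, so $b_l\le OPT$ and hence $c_l\ge 0$.

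Next I would lower-bound the combined marginal gain of the optimal samples with respect to the \emph{current} cover $C_l$. Since $\mathcal{O}$ covers features of total weight $OPT$, we have $w\!\left(\left(\bigcup_{j=1}^L \mathrm{F}(d_{o_j})\right)\setminus C_l\right)\ \ge\ OPT-w(C_l)\ =\ OPT-b_l\ =\ c_l$. Every feature in that set lies in $\mathrm{F}(d_{o_j})\setminus C_l$ for at least one $j$, so summing the individual marginal gains counts it at least once; therefore $\sum_{j=1}^L w\!\left(\mathrm{F}(d_{o_j})\setminus C_l\right)\ \ge\ c_l$. By averaging, some index $j^\star$ satisfies $w\!\left(\mathrm{F}(d_{o_{j^\star}})\setminus C_l\right)\ \ge\ c_l/L$.

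Finally, since $d_{o_{j^\star}}\in\mathcal{R}_t$ and the $(l{+}1)$-th greedy step selects the sample of maximum marginal gain, $W_{l+1}=\max_{d\in\mathcal{R}_t} w(\mathrm{F}(d)\setminus C_l)\ \ge\ w(\mathrm{F}(d_{o_{j^\star}})\setminus C_l)\ \ge\ c_l/L$, which is the claim. I expect the only delicate point — and the real crux — to be the double-counting step: the marginal gains of the $L$ optimal samples are all measured against the same cover $C_l$, and a feature shared by several optimal samples contributes to each of their gains, so the sum of gains \emph{dominates} (rather than underestimates) the weight of the union of newly covered features; once this is stated correctly, the pigeonhole and the ``greedy picks the max'' step are immediate, and the nonnegativity of the weights $w_i$ is what legitimizes all the monotonicity/subadditivity inequalities used above.
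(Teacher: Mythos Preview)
Your proposal is correct and follows essentially the same averaging/pigeonhole argument as the paper: the $L$ optimal samples together cover at least $c_l$ uncovered weight, so one of them has marginal gain at least $c_l/L$, and greedy matches or exceeds that. Your write-up is in fact more careful than the paper's (which informally invokes a $\frac{c_l}{L-l}$ bound before relaxing to $\frac{c_l}{L}$), but the underlying idea is identical.
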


\begin{proof}

The optimal solution reaches $OPT$ weight at $L$ iterations. That means, at each iteration $l+1$ there should be some unchosen samples in $\mathcal{R}_t$ whose contained weight greater than $\frac{c_l}{L-l}\ge\frac{c_l}{L}$. Otherwise, it was impossible to cover $OPT$ weight at $L$ steps by the optimal solution. Since the approximation algorithm is a greedy algorithm, i.e., choosing always the set covering the maximum weight of uncovered elements, the weight of chosen set at each iteration should be at least the $\frac{1}{L}$ of the weight of remaining uncovered elements.
\end{proof}

\begin{lemma}
In $l+1^{th}$ iteration, we have $c_{l+1}\le(1-\frac{1}{L})^{l+1}\cdot OPT$, where $c_{l+1}$ is the optimal left weight.
 \label{lm2}
\end{lemma}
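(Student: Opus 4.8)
The plan is to prove the statement by a short induction on the iteration index, taking Lemma~1 as the main tool. Concretely, I would show that $c_j \le (1-\frac{1}{L})^{j}\cdot OPT$ holds for every $j \ge 0$; the assertion of the lemma is then the special case $j = l+1$. The base case $j=0$ is immediate from the definition $c_0 = OPT = (1-\frac{1}{L})^0 \cdot OPT$.

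For the inductive step, I would first unfold the definitions: since $b_{l+1} = \sum_{j=1}^{l+1} W_j = b_l + W_{l+1}$ and $c_l = OPT - b_l$, we obtain the telescoping identity $c_{l+1} = c_l - W_{l+1}$. Now apply Lemma~1, which gives $W_{l+1} \ge \frac{c_l}{L}$ for the marginal weight picked at step $l+1$; substituting yields $c_{l+1} \le c_l - \frac{c_l}{L} = (1-\frac{1}{L})\, c_l$. Finally, invoking the induction hypothesis $c_l \le (1-\frac{1}{L})^l \cdot OPT$ and using that $1-\frac{1}{L} \ge 0$ (so multiplication preserves the inequality direction), we get $c_{l+1} \le (1-\frac{1}{L})^{l+1}\cdot OPT$, closing the induction.

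There is no genuinely hard step here — the real work sits in Lemma~1, which is assumed. The only points worth stating carefully are (i) the bookkeeping identity $c_{l+1} = c_l - W_{l+1}$, which requires reading $W_{l+1}$ as the \emph{additional} weight covered by the greedy choice at step $l+1$ rather than its raw weight, exactly as in the statement of Lemma~1; and (ii) the degenerate case in which the greedy run has already covered all of the optimal weight, so that $c_l \le 0$: there the claimed bound holds vacuously, since its right-hand side stays nonnegative and the sequence $c$ is nonincreasing. As an immediate consequence — and presumably the reason the lemma is recorded — setting $l+1 = L$ gives $c_L \le (1-\frac{1}{L})^L \cdot OPT \le \frac{OPT}{e}$, hence the greedy solution covers weight $b_L = OPT - c_L \ge (1-\frac{1}{e})\cdot OPT$, which is the advertised $1-\frac{1}{e}$ approximation ratio.
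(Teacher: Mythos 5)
Your proposal is correct and follows essentially the same route as the paper: induction on the iteration index, the bookkeeping identity $c_{l+1}=c_l-W_{l+1}$, and Lemma~1's bound $W_{l+1}\ge \frac{c_l}{L}$ to close the inductive step. The only cosmetic difference is that you anchor the induction at the trivial case $c_0=OPT$ whereas the paper starts from $c_1$, and your added remarks (the degenerate case $c_l\le 0$ and the $1-\frac{1}{e}$ corollary) are sound but not needed for the lemma itself.
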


\begin{proof}

We prove Lemma~\ref{lm2} by mathematical induction. 

\noindent \textit{Base case:} If $l=0$,

\noindent we have $OPT-c_1=b_1= W_1\ge \frac{c_1}{L}$.

\noindent So $c_1\le(1-\frac{1}{L})\cdot OPT$
%$OPT-b_1=c_1\le(1-\frac{1}{L})\cdot OPT \rightarrow z_1=b_1\ge c_0 \frac{1}{L}$.
and the lemma holds when $l=0$.

\noindent \textit{Inductive hypothesis:} Suppose the theorem holds for $l\ge0$.

\noindent \textit{Inductive step:} 
$c_l\le (1-\frac{1}{L})^l\cdot OPT$

\noindent So $c_{l+1}=c_l-W_{l+1}$ 
$\rightarrow c_{l+1} \le c_l -\frac{c_l}{L}$

\noindent $\rightarrow c_{l+1}\le(1-\frac{1}{L})^{l+1}\cdot OPT$.

\noindent So the theorem holds for $l+1$.
\end{proof}

\begin{theorem}
\label{approximation}
SAS can achieve an $1-\frac{1}{e}$ approximation ratio compared to the optimal solution for SFS.
\label{thrm}
\end{theorem}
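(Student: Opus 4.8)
The plan is to chain the two lemmas together and close with the elementary inequality $1-x\le e^{-x}$. First I would observe that the algorithm terminates after exactly $L$ iterations, so the total weight actually collected by SAS is $b_L=\sum_{j=1}^{L}W_j$, and $c_L=OPT-b_L$ is the weight that the optimal selection covers but $\mathcal{R}_t^{fixed}$ does not. Applying Lemma~\ref{lm2} with $l+1=L$ immediately gives $c_L\le(1-\tfrac{1}{L})^{L}\cdot OPT$.

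Next I would rephrase this in terms of the quantity we care about, the weight $b_L$ returned by SAS. Since $b_L=OPT-c_L$, the bound on $c_L$ translates into
\[
b_L \ge \Bigl(1-\bigl(1-\tfrac{1}{L}\bigr)^{L}\Bigr)\cdot OPT .
\]
Then I would invoke the standard estimate $(1-\tfrac{1}{L})^{L}\le e^{-1}$, which follows from $1-x\le e^{-x}$ with $x=1/L$ raised to the $L$-th power (valid for every integer $L\ge 1$). Substituting yields $b_L\ge(1-\tfrac{1}{e})\cdot OPT$, i.e.\ the total weight of features covered by $\mathcal{R}_t^{fixed}$ is at least a $1-\tfrac{1}{e}$ fraction of the optimum, which is exactly the claimed approximation ratio.

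I do not expect a genuine obstacle here: the substance is entirely carried by Lemmas~1 and~\ref{lm2}, and the final step is a one-line calculus bound. The only points that need a little care are the index bookkeeping — making sure the greedy loop really performs $L$ iterations so that the exponent in Lemma~\ref{lm2} reaches $L$ — and the fact that each marginal gain $W_{l+1}$ is nonnegative (being the weight of the additional features covered by a greedily chosen sample), so that $b_l$ is nondecreasing and the telescoping identity $c_{l+1}=c_l-W_{l+1}$ underlying Lemma~\ref{lm2} is legitimate.
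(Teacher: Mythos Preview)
Your proposal is correct and follows essentially the same route as the paper: apply Lemma~\ref{lm2} at $l+1=L$ to obtain $c_L\le(1-\tfrac{1}{L})^L\cdot OPT$, bound $(1-\tfrac{1}{L})^L$ by $e^{-1}$, and conclude $b_L=OPT-c_L\ge(1-\tfrac{1}{e})\,OPT$. The paper's own proof is terser (it simply writes $(1-\tfrac{1}{L})^L=\tfrac{1}{e}$ without the intermediate calculus bound), but your added remarks on index bookkeeping and nonnegativity of $W_{l+1}$ only make the argument more careful, not different in substance.
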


\begin{proof} 

According to Lemma~\ref{lm2}, we have 

\noindent $c_L\le (1-\frac{1}{L})^L\cdot OPT=\frac{1}{e}OPT$.

\noindent so $b_L=OPT-c_L\ge OPT (1-\frac{1}{e})$.
\end{proof}

From Theorem~\ref{approximation}, we know that SAS will drop about 30\% more stale features than the optimal solution. However, the SAS chooses the features with small staleness in prior and the actual drop ratio of stale features will be about 10\%, which is testified in Section~\ref{sec:casestudy}.
The complexity of SAS is $O(|\mathcal{R}_t|*L*m)$, where $m$ is the number of feature fields and $\mathcal{R}_t|$ is the size of reservoir from \textbf{RSS} which still have high computation cost because both $|\mathcal{R}_t|$ and $L$ can be million scales. We propose neighbour-based SAS which can decrease the complexity to $O(|\mathcal{R}_t|*m+\bar{|Q|}_{\mathcal{N}}*L*m)$, where the neighbours $\mathcal{N}$ refer to the samples which have at least one common stale features with the current chosen sample and $\bar{|Q|}_{\mathcal{N}}$ denotes the mean of the neighbour numbers. Before Line 3 of Algorithm~{\ref{algsas}}, we only need to update $W_l$ of these neighbour samples rather than all samples in the reservoir. Because only samples sharing at least one stale feature will be neighbors, $\bar{|Q|}_{\mathcal{N}}$ is much smaller than the total number of samples. For example in Avazu, the $\bar{|Q|}_{\mathcal{N}}$ is in the 1e3 scale and the sample number or the total feature number is in the 1e8 scale.

\subsection{Staleness Aware Regularization~(SAR)}
%In this section, we will introduce a staleness aware regularization term to restrict the embedding updating of features according to the extent of their staleness. %using the staleness defined previously.

%Before introducing the SAR itself, we need to introduce a quantified definition for the feature staleness.
%The staleness problem also exists in other tasks, such as temporal interaction network~\cite{stale3,stale2}. There are some solutions for this problem. 
%For example, JODIE~\cite{stale3} uses time projection to shift the embeddings over time. However, all embeddings will be changed even if they are not stale. And all embeddings with different extents of staleness share a common weight for embedding shifting, which is not precise enough. Other methods using temporal graph attention~\cite{stale1,stale2} usually have extremely high complexity and are only compatible with temporal graph tasks.

Intuitively, %the newly missed features are usually missed due to the fluctuation of data distribution between each incremental step, which is still useful for model updating. Thus they should have smaller staleness. On the opposite, 
%the features keeping stale for a long time are usually low-frequency features, which probably will not appeared in future and do not require constantly embedding updating. Thus they should have larger staleness. 
the features keeping stale for a long time are probably low-frequency features. Updating their embeddings too frequently will increase the risk of over-fitting~\cite{dcn,sun2022self}. We follow the regularization-based methods~\cite{ader} to restrict the embedding updating of these kind of features using regularization term. However, a certain ratio of features only stale for one or two time spans, whose regularization term will be quite large and become an unaffordable burden for the model training.
 Thus it is unreasonable to treat all stale features equally regardless of the extent of feature staleness.
More formally, we add a regularization loss for feature $f_i$, which is called \textit{guard}:
\begin{equation}
\label{guard}
g_i = \frac{\min(s_i,\eta)}{\min(s_{max},\eta)}||\Delta e_i||_2,
\end{equation}
where $s_{max}$ is the max staleness in the current time span, $\Delta e_i$ is embedding change for feature $f_i$ between two consecutive mini-batches, and $\eta$ is a hyperparameter used to exclude the extreme staleness value. $s_{max}$ is used to normalize the staleness to mitigate the influence of the dataset's overall staleness extent. Thus the total loss of \model will be:
\begin{equation}
\label{loss}
\mathcal{L} = \sum_{d\in\mathcal{D}_t\cup\mathcal{R}_t}(\mathcal{L} _{CE,d}+\lambda\sum_{i=1}^N g_i),
\end{equation}
where $\mathcal{L} _{CE,d}$ refers to the cross entropy loss of model training on sample $d$, and $\lambda$ is the regularization coefficient. 

\subsection{Implementation Details}
\label{imple}
We implemented our FeSAIL approach using Pytorch 1.10 on a 64-bit Linus server equipped with 32 Intel Xeon@2.10GHz CPUs, 128GB memory and four RTX 2080ti GPUs. 
%As the focus of this work is to compare the effectiveness and efficiency of deep CTR models when training with batch mode and incremental training by \modelns, we choose a popular deep CTR model DCN~\cite{dcn} for such comparison, which is also chosen as the base model in many other works~\cite{incctr,asmg}. 
The proposed FeSAIL is model-agnostic. To test its effectiveness, we instantiate it on a general deep learning-based Embedding\&MLP model, a network architecture that most of the CTR prediction models developed in recent years are based on~\cite{asmg,sml,narm}. %The model mainly comprises embedding layers that transform high-dimensional sparse features into low-dimensional dense vectors and Multi-Layer Perceptron~(MLP) layers that learn the interaction of the concatenated feature embeddings.
%Observations on other deep CTR models, such as PNN and DeepFM are similar.
By default, we set the inverse correlation function in Eq.~(\ref{func}) as the inverse proportional function and the bias term as 1. %The $\phi$ and $\psi$ in Eq.~(\ref{phi}) and Eq.~(\ref{psi}) are set as max function. 
The sampling reservoir size $L$ of $\mathcal{R}_t$ is the same size as the corresponding incremental dataset  $\mathcal{D}_t$.
We use a grid search over the hidden layer size, initial learning rate and the number of cross layers. The batch size is set from 256 to 4096. The embedding size and hidden layer sizes are chosen from 32 to 1024. The $\eta$ in Eq.~(\ref{guard}) is from 5 to 10. We choose the Adam optimizer~\cite{adam} to train the model with a learning rate from 0.0001 to 0.001 and perform early stopping in the training process. 
\subsection{Complexity Analysis}
%\subsubsection{Time Complexity}
\noindent \textbf{Time Complexity.}
The computational complexity of SAS is $O(|\mathcal{R}_t|*m+\bar{|Q|}*L*m)$ using neighbor based optimization.
The regularization term for SAR has complexity as $O((|\mathcal{R}_t|+|\mathcal{D}_t|)*m)$.
Thus the total external time complexity will be negligible compared to model training.

%\subsubsection{Space Complexity}
\noindent \textbf{Space Complexity.}
SAS needs to store $|\mathcal{R}_t|$ samples, the size of which is fixed and can be set according to the actual memory space. And the SAR is a non-parameterized module which does not require extra storage space.

\section{Experiment}
 We conduct the experiments to evaluate the performance of our proposed method and answer the following questions :
\begin{itemize}[itemsep=0pt,topsep=0pt,parsep=0pt,leftmargin=10pt]
\item \textbf{RQ1} Can the proposed \model outperform the existing IL methods on CTR prediction?
\item \textbf{RQ2} How do the SAR and SAS algorithms affect the effectiveness of \model?
\item \textbf{RQ3} How do different inverse correlation functions and biases influence the performance of \modelns?
\item \textbf{RQ4} How does the SAS deal with features with different staleness?
\end{itemize}

\subsection{Experimental Setups}
\begin{table}[t]
\normalsize
\centering

% \vspace{-0.1in}

 \begin{tabular}{cccc}
 \toprule
  Dataset&\#samples&\#fields&\#features\\
 \hline
 Criteo & 10,692,219 & 26 & 1,849,038 \\ 
 iPinYou & 21,920,191& 21 & 4,893,029 \\
 Avazu & 40,183,910 & 20 & 10,922,019 \\
 Media & 104,416,327 & 337 & 60,833,522 \\
 \bottomrule
 \end{tabular}
  \caption{The statistics of the datasets.}
   \label{tab:dataset}
% \vspace{-0.14in}
\end{table}

%\begin{figure}	
%
%\vspace{-0.13in}
%	\centering
%	
%	\begin{subfigure}[b]{0.24\textwidth}
%	\centering
%	\includegraphics[width=\textwidth]{}
%	\vspace{-0.25in}
%	\caption{Avazu}
%	\label{dataset:avazu}
%	\end{subfigure}
%	\hspace{-2.5mm}
%	\begin{subfigure}[b]{0.24\textwidth}
%	\centering
%	\includegraphics[width=\textwidth]{}
%	\vspace{-0.25in}
%	\caption{ Media}
%	\label{dataset:weixin}
%	\end{subfigure}
%	%\vspace{-0.14in}
%	\caption{The number of features with different staleness on Avazu and Media.}
%	\label{dataset:staleness}
%	%\vspace{-0.14in}
%\end{figure}

%\subsubsection{Dataset}
\noindent \textbf{Datasets.}
We use three real-world datasets from Criteo\footnote{https://www.criteo.com}, iPinYou\footnote{https://www.iPinYou.com}, Avazu\footnote{https://www.kaggle.com/c/avazu-ctr-prediction} and one private industrial dataset collected from a commercial media platform.
\begin{itemize}[itemsep=0pt,topsep=0pt,parsep=0pt,leftmargin=10pt]
\item \textbf{Criteo} This dataset consists of 24 days’ consecutive traffic logs from Criteo, including 26 categorical features and the first column as a label indicating whether the ad has been clicked or not.
\item \textbf{iPinYou} This dataset is a public real-world display ad dataset with each ad display information and the corresponding user clicks feedback. The data logs are organized by different advertisers and in a row-per-record format. There are 21.92M data instances with 14.89K positive labels (click) in total. The features for each data instance are categorical. 
\item \textbf{Avazu} This dataset contains users’ click behaviours on displayed mobile ads. There are 20 feature fields, including user/device features and ad attributes. %The fields are partial anonymous. 
\item \textbf{Media} This is a real CTR prediction dataset collected from a commercial media platform. We extract 48 consecutive hours of data that contains 337 categorial feature fields.
\end{itemize}

The detailed statistics of the datasets are listed in Table~\ref{tab:dataset}. %We also analyze the overall feature staleness extent of a dataset. In Figure~\ref{dataset:avazu}, we can see that feature groups in Avazu are evenly distributed on different stalenesses, which means Avazu has a certain percentage of features reappearing after a long time and has high staleness extent. While in Figure~\ref{dataset:weixin},  the staleness of feature groups in Media are concentrate within small values, which means this dataset has relatively lower staleness extent.
For a fair comparison, we follow the same data preprocessing and splitting rule for all the compared methods. We omit the staleness analysis of Criteo and iPinYou which have a similar staleness feature distribution with Avazu.

%We filter out all features which appear in fewer than ten samples as an ``unknown" feature. For each dataset, the samples in the first half in the time sequence serve as the pretraining datasets. And the remaining samples will be divided equally into eleven incremental datasets. We use the samples in the current incremental dataset and reservoir to fine-tune the model and evaluate it on the next incremental dataset. 

%We present the overall feature staleness extent of the two largest datasets. As presented in Figure~\ref{casestudy:avazu}, we can see that feature groups in Avazu are evenly distributed on different stalenesses, which means Avazu has a certain percentage of features reappearing after a long time and has a high staleness extent. While in Figure~\ref{casestudy:weixin}, the stalenesses of feature groups in Media are concentrated within small values, which means this dataset has a relatively lower staleness extent.

%\subsubsection{Evaluation Metrics}
\noindent \textbf{Evaluation Metrics.}
We use the following two metrics for performance evaluation: AUC (Area Under the ROC curve) and Logloss (cross entropy), which are commonly used in CTR prediction works~\cite{ipinyou,fnn}.
%\begin{itemize}
%\item \textbf{AUC} measures the probability that a positive instance will be ranked higher than a randomly chosen negative one. A higher AUC indicates better performance.
%\item \textbf{Logloss} measures the distance between the predicted score and the true label for each instance. A lower Logloss indicates a better performance.
%\end{itemize}

%\subsubsection{Baselines}
\noindent \textbf{Baselines.}
We compare our proposed \model with six baselines including two existing sample-based and two model-based IL methods for training CTR prediction models.
\begin{itemize}[itemsep=0pt,topsep=0pt,parsep=0pt,leftmargin=10pt]
 \item \textbf{Incremental Update (IU)} updates the model incrementally using only the new data.
%  \item \textbf{Batch Update (BU-w)} updates the model using the most recent w periods of data.
  \item \textbf{Random Sample (RS)} updates the model using the new data and the randomly sampled historical data.
  \item \textbf{RMFX}~(sample-based)~\cite{rmfx} is a matrix factorization-based model which uses an active learning sampling algorithm to replay the most informative historical samples.%, here we use DCN as the base model for a fair comparison. 
   %uses a pre-calculated item embedding matrix to evaluate item's informativeness score and use the 
  %\item \textbf{T-sample} is a dual reservoir-based sampling method for large scale stream based recommendation.
  \item \textbf{GAG}~(sample-based)~\cite{gag} is a method which uses a global attributed graph neural network for streaming session-based recommendation. We regard one sample in the CTR task as a piece of session in this model.
  %\item \textbf{incCTR} is a is a model-based approach that incorporates a knowledge distillation loss during training. We use the version that uses the previous incremental model as teacher model. We tune the weight assigned to the knowledge distillation loss from 0.1 to 0.5.
  \item \textbf{EWC}~(model-based)~\cite{ewc} is a general IL method which prevents the model from changing the informative parameters learned in previous datasets. %We compare EWC because our regularization term is similar to this method.
  \item \textbf{ASMG}~(model-based)~\cite{asmg} is a meta-learning-based incremental method for CTR prediction models, which uses a sequentialized meta-learner to generate model parameters in the next time span.
 \end{itemize}
\textbf{RMFX+SAR} and \textbf{GAG+SAR} add the guard term in Eq.~(\ref{guard}) on the loss function~(SAR) of original RMFX and GAG. Similarly, \textbf{EWC+SAS} and \textbf{ASMG+SAS} use our sample method SAS to enhance the performance of EWC and ASMG, which is designed to prove the effectiveness of guarding term. Note that we use the same reservoir size for our SAR and sample-based baselines.

\subsection{Experimental Results: RQ1}
\begin{table*}
\normalsize
\centering

 %\vspace{-0.1in}
	\setlength{\tabcolsep}{1.13mm}{
 \begin{tabular}{c|cccccccccccc}
 \toprule

  Datasets &\multicolumn{3}{c}{Criteo}&\multicolumn{3}{c}{iPinYou}&\multicolumn{3}{c}{Avazu}	&\multicolumn{3}{c}{Media}\\
\midrule
 Methods& AUC & Logloss & RI~(\%) & AUC & Logloss & RI~(\%) & AUC & Logloss & RI~(\%)		& AUC & Logloss & RI~(\%)\\ 
\midrule
									IU
&	0.7329 	&	0.5284 	&	4.43 	&	0.7528 	&	0.4866 	&	4.09 	&	0.7210 	&	0.4388 	&	3.81 &	0.6316 	&	0.2884 	&	4.46 	\\ 
%BU-3
%&	0.7232 	&	0.5373 	&	5.90 	&	0.7440 	&	0.4951 	&	5.52 	&	0.7095 	&	0.4465 	&	5.46 	\\ BU-5
%&	0.7179 	&	0.5450 	&	6.94 	&	0.7370 	&	0.4960 	&	6.10 	&	0.7054 	&	0.4475 	&	5.87 	\\ 
RS
&	0.7363 	&	0.5198 	&	3.41 	&	0.7585 	&	0.4791 	&	2.96 	&	0.7244 	&	0.4346 	&	3.11 	&	0.6379 	&	0.2842 	&	3.26 \\ \midrule RMFX
&	0.7415 	&	0.5096 	&	2.10 	&	0.7621 	&	0.4791 	&	2.72 	&	0.7306 	&	0.4328 	&	2.48 	&	0.6346 	&	0.2844 	&	3.56\\ GAG
&	0.7459 	&	0.5084 	&	1.68 	&	0.7674 	&	0.4715 	&	1.58 	&	0.7352 	&	0.4249 	&	1.26 &	0.6401 	&	0.2759 	&	1.65	\\ RMFX+SAR
&	0.7480 	&	0.5081 	&	1.50 	&	0.7672 	&	0.4728 	&	1.73 	&	0.7344 	&	0.4251 	&	1.34 	&	0.6379 	&	0.2804 	&	2.61\\ GAG+SAR
&	0.7498 	&	0.5012 	&	0.71 	&	0.7735 	&	0.4701 	&	1.04 	&	0.7380 	&	0.4215 	&	0.68 	&	0.6451 	&	0.2729 	&	0.71\\ \midrule EWC
&	0.7391 	&	0.5124 	&	2.53 	&	0.7595 	&	0.4783 	&	2.81 	&	0.7295 	&	0.4252 	&	1.69 	&	0.6374 	&	0.2757 	&	2.53\\ ASMG
&	0.7502 	&	0.5039 	&	0.95 	&	0.7693 	&	0.4708 	&	1.38 	&	0.7394 	&	0.4233 	&	0.78 &	0.6432 	&	0.2777 	&	1.72	\\ EWC+SAS
&	0.7479 	&	0.5075 	&	1.45 	&	0.7664 	&	0.4762 	&	2.14 	&	0.7343 	&	0.4196 	&	0.69 	&	0.6398 	&	0.2758 	&	1.65\\ ASMG+SAS
&	0.7535 	&	0.5001 	&	0.35 	&	0.7734 	&	0.4666 	&	0.68 	&	0.7407 	&	0.4200 	&	0.30 	&	0.6471 	&	0.2727 	&	0.52\\ \midrule \textbf{\modelns}
&	*0.7553 	&	*0.4978 	&	-	&	*0.7788 	&	*0.4636 	&	-	&	*0.7420 	&	*0.4181 	&	-	&	*0.6503 	&	*0.2712 	&	-	\\
\bottomrule
 \end{tabular}
 }
  %\vspace{-0.05in}
  \caption{Average AUC \& LogLoss over 10 test periods for four datasets. The results on different metrics are given by the average scores over all the incremental datasets. Note that ``RI" indicates the relative improvement of \model over the corresponding baseline. * denotes $p<0.05$ when performing the two-tailed pairwise t-test on FeSAIL with the best baselines.}
   \label{tab:experiment_result}
 %\vspace{-0.1in}
\end{table*}

%\begin{table}[t]
%\normalsize
%\centering
% \caption{Average AUC and LogLoss results over 10 test periods on Media.}
%% \vspace{-0.1in}
% \label{tab:weixin_result}
%	\setlength{\tabcolsep}{3.2mm}{
% \begin{tabular}{c|ccc}
% \toprule
% Methods& AUC & Logloss & RI~(\%)\\ 
%\midrule
%IU
%&	0.6316 	&	0.2884 	&	4.46 	\\ RS
%&	0.6379 	&	0.2842 	&	3.26 	\\ \midrule RMFX
%&	0.6346 	&	0.2844 	&	3.56 	\\ GAG
%&	0.6401 	&	0.2759 	&	1.65 	\\ RMFX+SAR
%&	0.6379 	&	0.2804 	&	2.61 	\\ GAG+SAR
%&	0.6451 	&	0.2729 	&	0.71 	\\ \midrule EWC
%&	0.6374 	&	0.2757 	&	2.53 	\\ ASMG
%&	0.6432 	&	0.2777 	&	1.72 	\\ EWC+SAS
%&	0.6398 	&	0.2758 	&	1.65 	\\ ASMG+SAS
%&	0.6471 	&	0.2727 	&	0.52 	\\ \midrule \textbf{FeSAIL}
%&	*0.6503 	&	*0.2712 	&	-	\\
%\bottomrule
% \end{tabular}
% }
%% \vspace{-0.1in}
%\end{table}

Table~\ref{tab:experiment_result} presents the overall performance on four datasets~(the results are all averaged over ten runs), where we have the following observations.

\begin{figure}
\vspace{-.05in}
	\centering
	\begin{subfigure}[b]{0.49\textwidth}
     \centering
	\includegraphics[width=\textwidth]{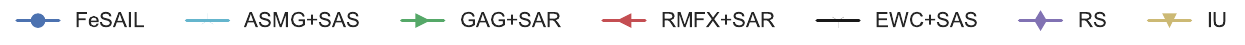}
	\vspace{-0.2in}
	\end{subfigure}
	\begin{subfigure}[b]{0.24\textwidth}
     \centering
	\includegraphics[width=\textwidth]{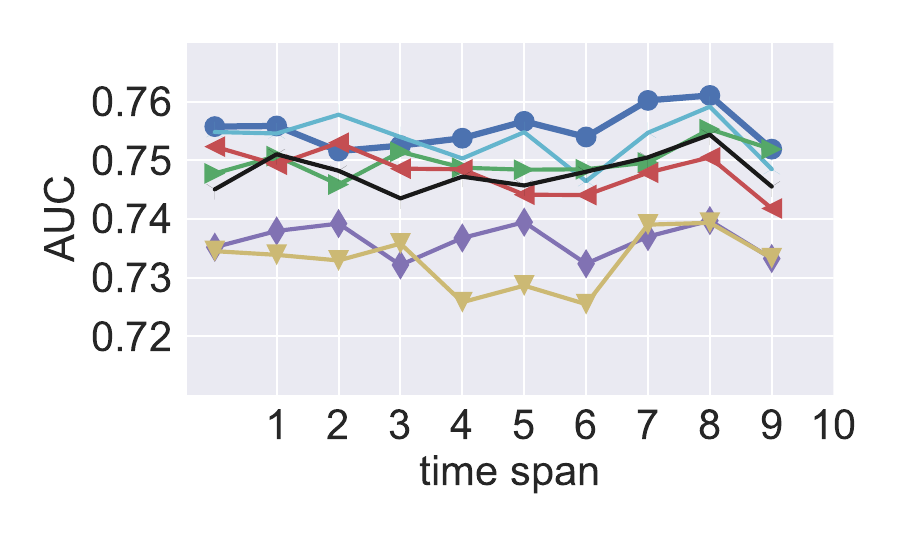}
		 \vspace{-0.24in}
	\caption{AUC on Criteo}
	\end{subfigure}
	\hspace{-3mm}
	\begin{subfigure}[b]{0.24\textwidth}
     \centering
	\includegraphics[width=\textwidth]{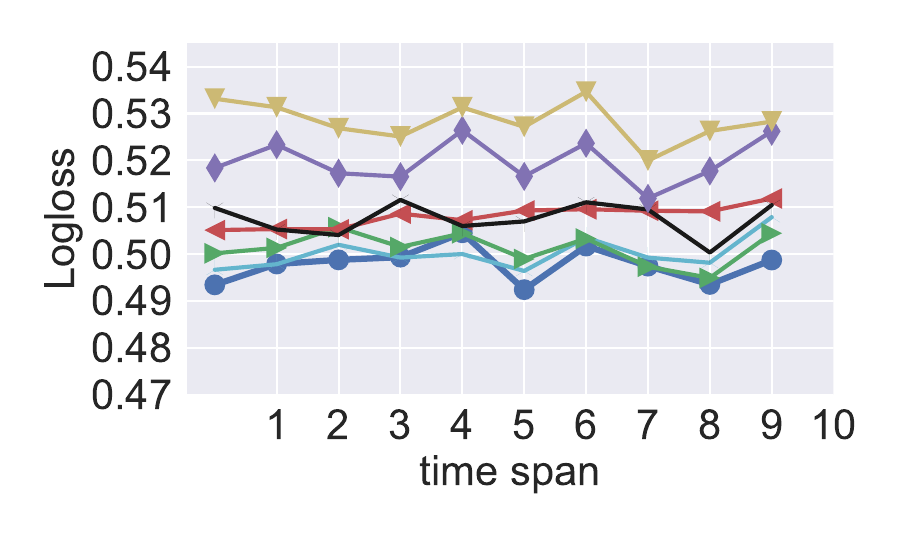}
		 \vspace{-0.24in}

	\caption{Logloss on Criteo}
	\end{subfigure}
	\begin{subfigure}[b]{0.24\textwidth}
     \centering
	\includegraphics[width=\textwidth]{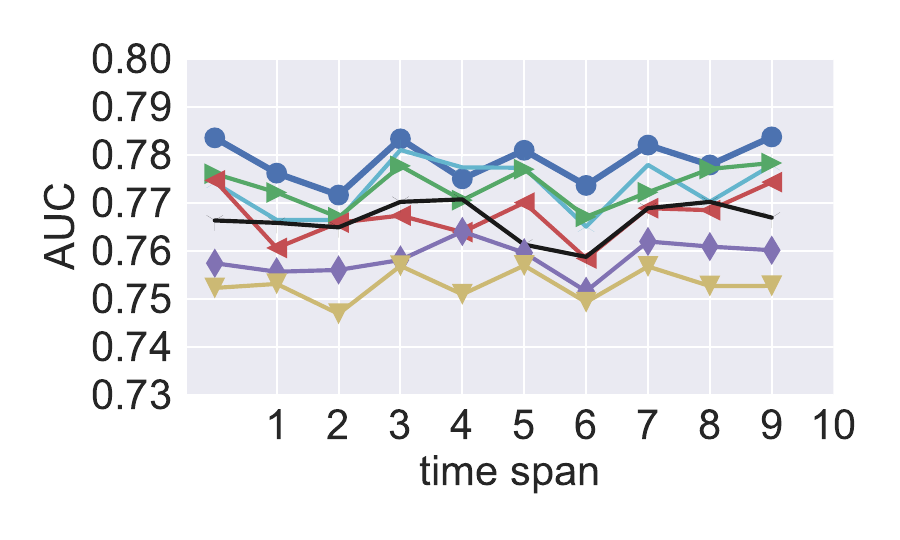}
		 \vspace{-0.24in}

	\caption{AUC on iPinYou}
	\end{subfigure}
	\hspace{-3mm}
	\begin{subfigure}[b]{0.24\textwidth}
     \centering
	\includegraphics[width=\textwidth]{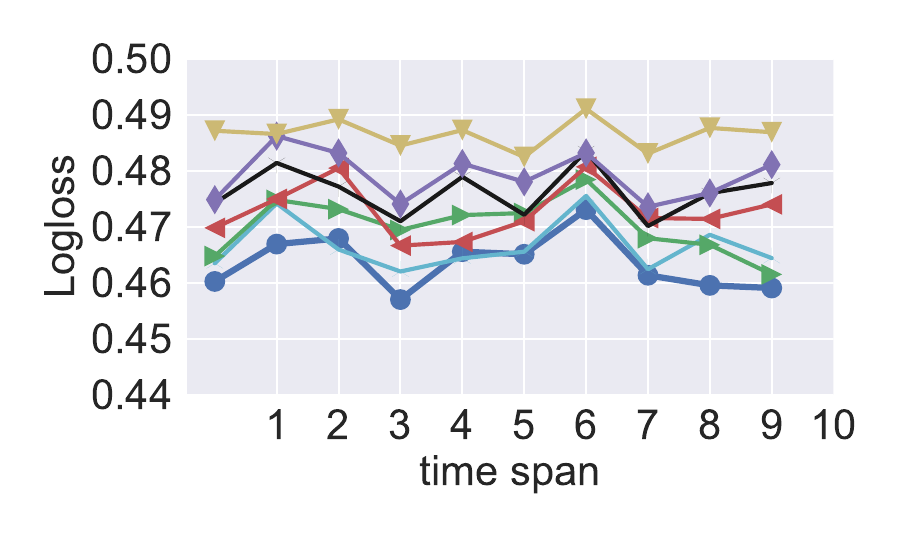}
		 \vspace{-0.24in}

	\caption{Logloss on iPinYou}
	\end{subfigure}
	
	\begin{subfigure}[b]{0.24\textwidth}
     \centering
	\includegraphics[width=\textwidth]{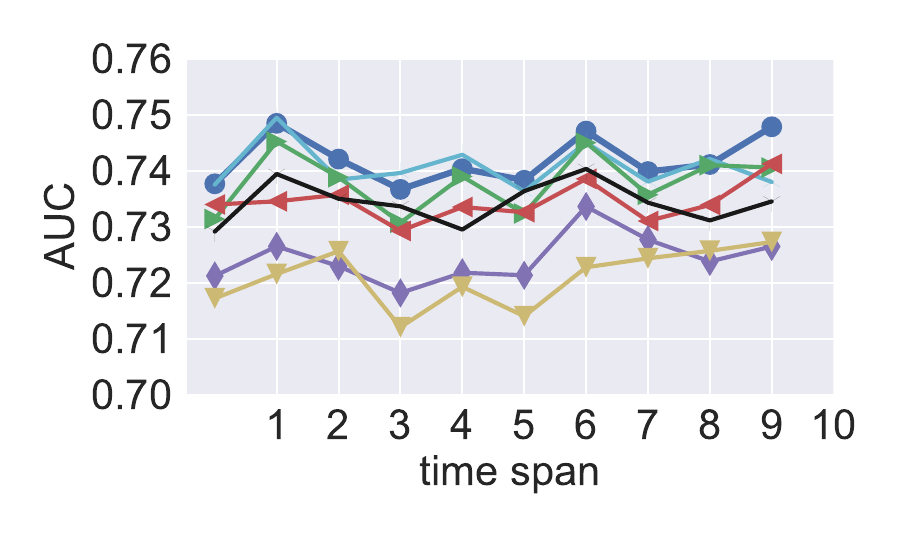}
		 \vspace{-0.26in}

	\caption{AUC on Avazu}
	\end{subfigure}
	\hspace{-3mm}
	\begin{subfigure}[b]{0.24\textwidth}
     \centering
	\includegraphics[width=\textwidth]{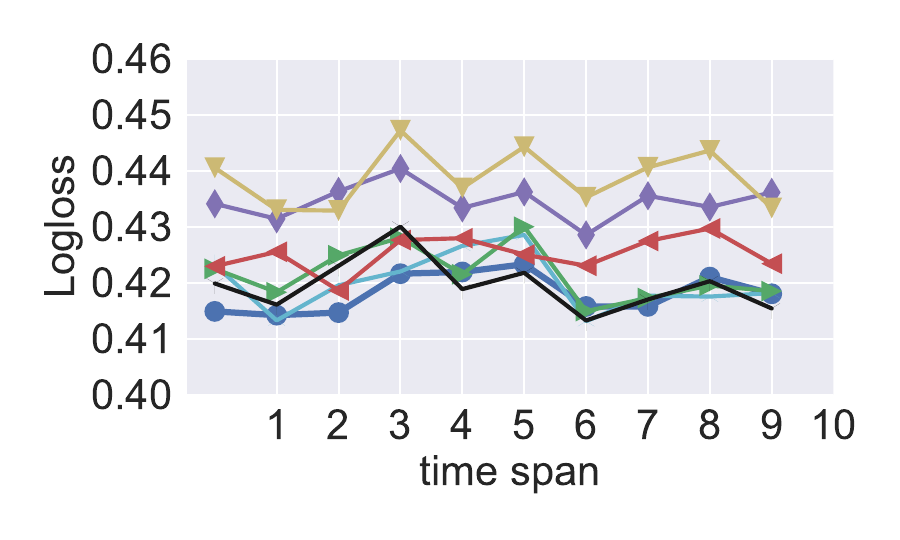}
	 \vspace{-0.26in}
	\caption{Logloss on Avazu}
	\end{subfigure}
	
	\begin{subfigure}[b]{0.245\textwidth}
     \centering
	\includegraphics[width=\textwidth]{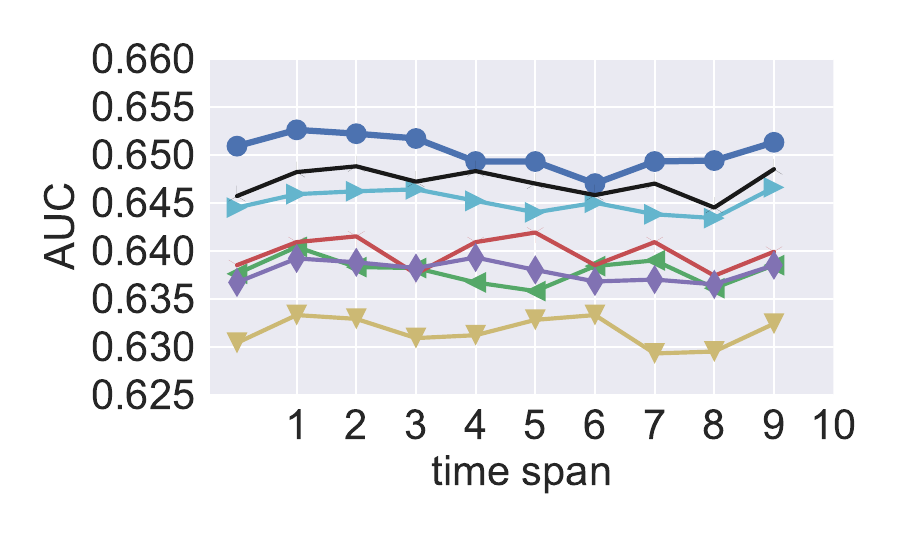}
		 \vspace{-0.26in}
	\caption{AUC on Media}
	\end{subfigure}
	\hspace{-4mm}
	\begin{subfigure}[b]{0.245\textwidth}
     \centering
	\includegraphics[width=\textwidth]{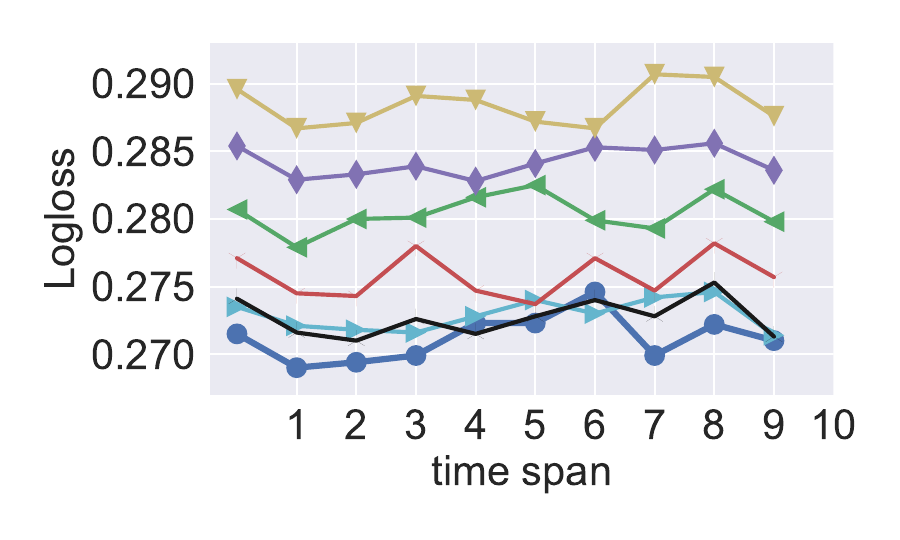}
	 \vspace{-0.26in}
	\caption{Logloss on Media}
	\end{subfigure}

	\vspace{-0.1in}
	\caption{Prediction performance on each time span. We present the six most competitive baselines and omit the remaining methods.}%omit RMFX,GAG,EWC and ASMG here, as they are not comparable to the rest methods.}
	%\vspace{-0.15in}
	\label{result}
\end{figure}

 First, RS performs slightly better than IU, which reflects that the future samples also need the old knowledge preserved in historical samples. Second, RMFX+SAR and GAG+SAR perform better than RMFX and GAG respectively, which shows restricting the feature embedding updating according to its staleness can also improve the performance of methods with other sampling strategies. Third, EWC+SAS and ASMG+SAS perform better than EWC and ASMG respectively, which indicates using a fixed size reservoir to replay stale features is compatible with these model-based methods using other regularization strategies. Fourth, \model achieves 0.95\%, 1.38\%, 0.78\% and 1.72\% relative improvements compared to ASMG, the best baseline on four datasets, respectively. The reason why \model  achieves greater improvement on Media might be that Media has lower feature staleness extent%~(as presented in Figure~\ref{dataset:weixin}) 
 which is easier to be overcome by SAS.
%Fourth, \model achieves 0.71\%, 1.04\%, and 0.68\% relative improvements compared to GAG+SAS, the best sample-based baseline using our regularization strategy and achieves 0.35\%, 0.68\%, and 0.30\% relative improvements compared to ASMG+SAS, the best model based baseline using our sampling strategy.
%Table~\ref{tab:weixin_result} presents the overall performance on Media, where we have similar observations as those on the public datasets. More specifically, \model achieves 0.71\% relative improvement compared to GAG+SAS, the best sample-based baseline using our regularization strategy and achieves 0.52\% relative improvements compared to ASMG+SAS, the best model-based baseline using our sampling strategy.
Figure~\ref{result} shows the disentangled performance at each time span. The performance of all methods has a similar trend on different datasets, and \model outperforms all the other methods in most time spans.

\subsection{Ablation Study: RQ2}
%\subsubsection{Effects of SAS/SAR}

\begin{figure}
\vspace{-0.04in}
	\centering
	\begin{subfigure}[b]{0.49\textwidth}
     \centering
	\includegraphics[width=\textwidth]{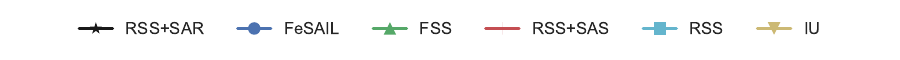}
\vspace{-0.25in}
	\end{subfigure}

	\begin{subfigure}[b]{0.45\textwidth}
     \centering
	\includegraphics[height=2.8cm,width=.96\textwidth]{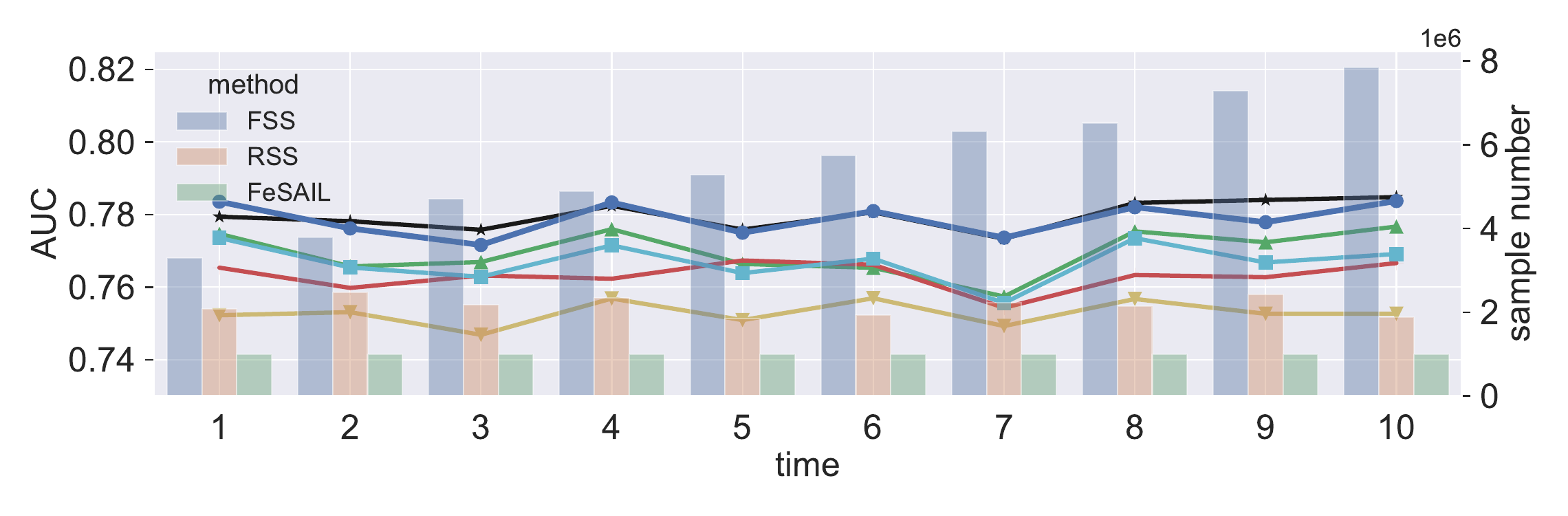}
%	 \vspace{-0.02in}
	\vspace{-0.23in}
	%\caption{iPinYou}
	\end{subfigure}
%	\hspace{-1.5mm}
%	\begin{subfigure}[b]{0.49\textwidth}
%     \centering
%	\includegraphics[height=2.4cm,width=.9\textwidth]{}
%	 \vspace{-0.13in}
%	\caption{Avazu}
%	\end{subfigure}
	
%\vspace{-0.1in}
	\caption{Ablation study on Avazu.}
	\label{ablation}
	\vspace{-0.1in}
\end{figure}

\begin{table}
	\normalsize
	\centering
	\vspace{-0.0in}
	
	 \begin{tabular}{c|ccc}
	 \toprule
	Method/Runtime~(min) & sampling & training & total\\
	\midrule
	IU & 0 & 2,519 & 2,519 \\
	FSS & 190 & 2,519 & 2,709 \\
	RSS & 143 & 920 &1,063 \\
	RSS+SAR & 143 & 1,014 & 1,157 \\
	RSS+SAS & 401 & 920 & 1,321 \\
	\textbf{\modelns} & 401 & 1,014 & 1,415 \\
	\bottomrule
	\end{tabular}
	\caption{Average sampling and retraining time on Avazu.}
	%\vspace{-0.1in}
	\label{speed}
	\vspace{-0.1in}
\end{table}

%\noindent \textbf{Effects of SAS/SAR.}
We perform an ablation study on iPinYou and Avazu to evaluate the effects of different components of \model on the performance. Similar conclusions can be drawn on Criteo and Media~(results can be found in supplementary material). Specifically, we consider the following settings.
\begin{itemize}[itemsep=0pt,topsep=0pt,parsep=0pt,leftmargin=10pt]
 \item \textbf{Incremental Update (IU)\footnote{For a fair comparison, we supplement the latest historical samples to make the total sample size the same as FSS.}} updates the model incrementally using only the new data.
  \item \textbf{Full Stale Sampling~(FSS)} updates the model with new data and all the historical samples containing stale features.
  \item \textbf{Reservoir-based Stale Sampling~(RSS)} updates the model using new data and all the samples in the reservoir.
   \item \textbf{RSS+SAR} add the SAR on \textbf{RSS}.
   \item \textbf{RSS+SAS} restrict the reservoir size of \textbf{RSS} using SAS.
    \item \textbf{\modelns} using both the SAR and SAS on \textbf{RSS}.
   \end{itemize}
The results are presented in Figure~\ref{ablation}. The line chart reflects the AUC difference between different methods. We can see that \model performs best among all the comparison methods, which shows the removal of any component from \model will hurt the final performance. SAR and SAS jointly contribute to improving the \textbf{RSS} performance.
The bar chart reflects the sample number in each time span. We can see that \textbf{FSS} samples size increases during incremental learning due to the growing size of historical samples. However, the sample sizes of \textbf{RSS} and \textbf{\modelns} maintain relatively stable. Moreover, the sample size of \textbf{\modelns} is fixed and controllable, which is more friendly to real-world incremental recommendation systems.
One concern of SAS is to control the time cost of sampling and model training. Table~\ref{speed} shows the average sampling time and the retraining time of each method over different time spans on Avazu. %The testing platform is given in Section~\ref{imple}. 
The time cost for SAR is negligible compared to the total training time cost. SAS can remarkably reduce the sample size and accelerate model retraining. In the meanwhile, thanks to the neighbour-based SAS algorithm, the sampling time cost for SAS is bearable.

\subsection{Parameter Sensitivity: RQ3}
\begin{figure}
\vspace{-0.1in}
	\centering
	
	\begin{subfigure}[b]{0.49\textwidth}
     \centering
	\includegraphics[width=\textwidth]{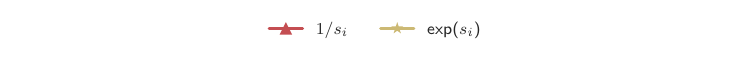}
	\vspace{-0.25in}
	\end{subfigure}
	
%	\begin{subfigure}[b]{0.24\textwidth}
%     \centering
%	\includegraphics[width=\textwidth]{}
%	 \vspace{-0.24in}
%%	\caption{iPinYou}
%	\end{subfigure}
%\hspace{-3mm}
	\begin{subfigure}[b]{0.38\textwidth}
     \centering
	\includegraphics[width=\textwidth]{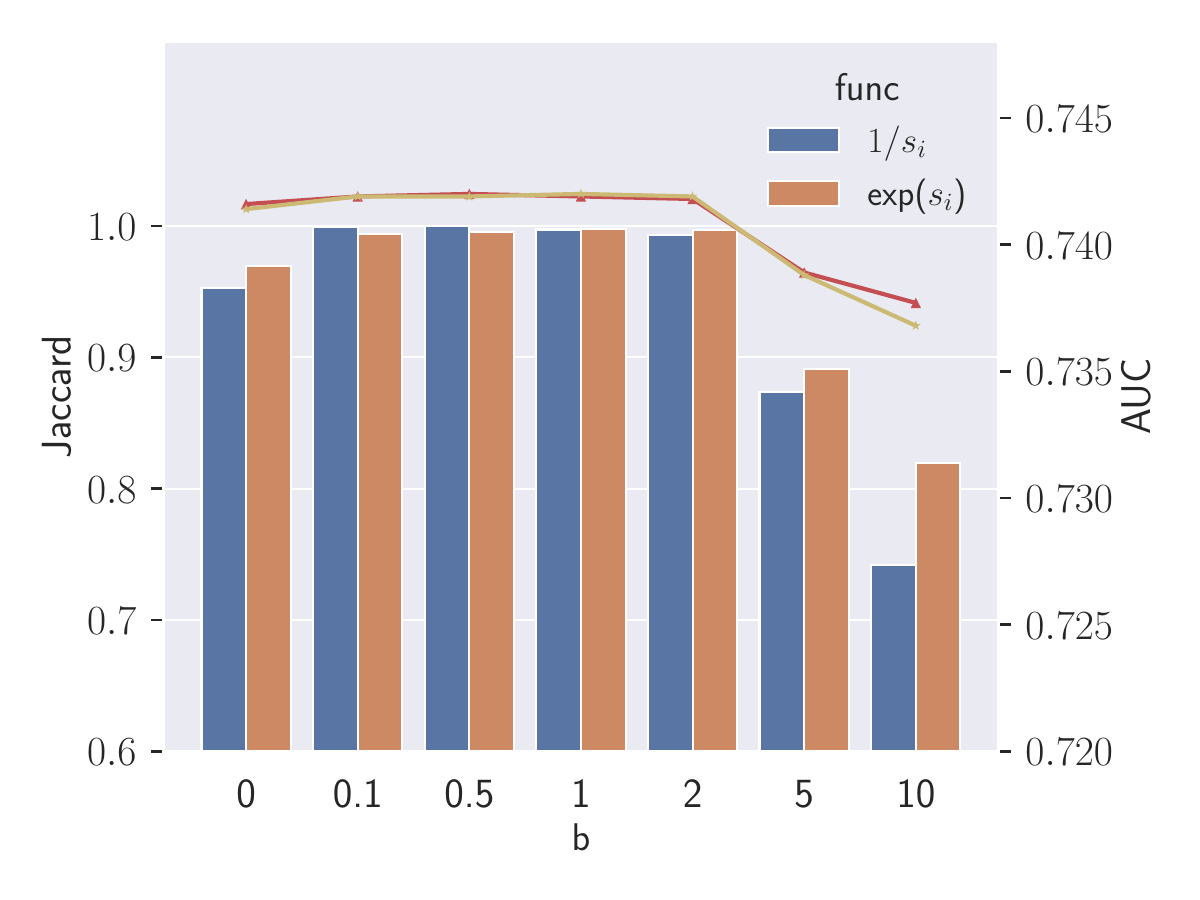}
	 \vspace{-0.34in}
%	\caption{Avazu}
	\end{subfigure}
	
%\vspace{-0.1in}
	\caption{Parameter sensitivity \textit{w.r.t} different inverse correlation functions and biases on Avazu.}
	%\vspace{-0.14in}
	\label{parameter}
\end{figure}

We investigate the sensitivity of choosing different inverse correlation functions $func$ and bias $b$ in Eq.(\ref{func}). We choose iPinYou and Avazu, and report the average AUC~(line chart) and Jaccard Similarity~(JS) of chosen samples set with the control group~(the group with the highest AUC) over ten time spans. The $func$ is chosen between inverse proportional function $1/s_i$ and negative exponential function $\exp(s_i)$. The $b$ is chosen from $[0, 0.1, 0.5, 1, 2, 5, 10]$, where the larger $b$ means the less consideration for feature staleness. As shown in Figure~\ref{parameter}, the performance on two datasets achieve a similar JS with two $func$ and $b$ in the range $[0.1, 0.5, 1, 2]$, which means the chosen samples set is insensitive to the $func$ and the $b$ within a moderate range and productively achieve a competitive AUC towards the control group. The reason might be that only the relative weight sum order of samples will influence the choosing order in Algorithm~\ref{algsas}, and choosing different $func$ and $b$ with small values will not change the relative weight sum order of samples.

\subsection{Case Study: RQ4}
\label{sec:casestudy}

\begin{figure}

\vspace{-0.08in}
	\centering
	\includegraphics[width=0.48\textwidth]{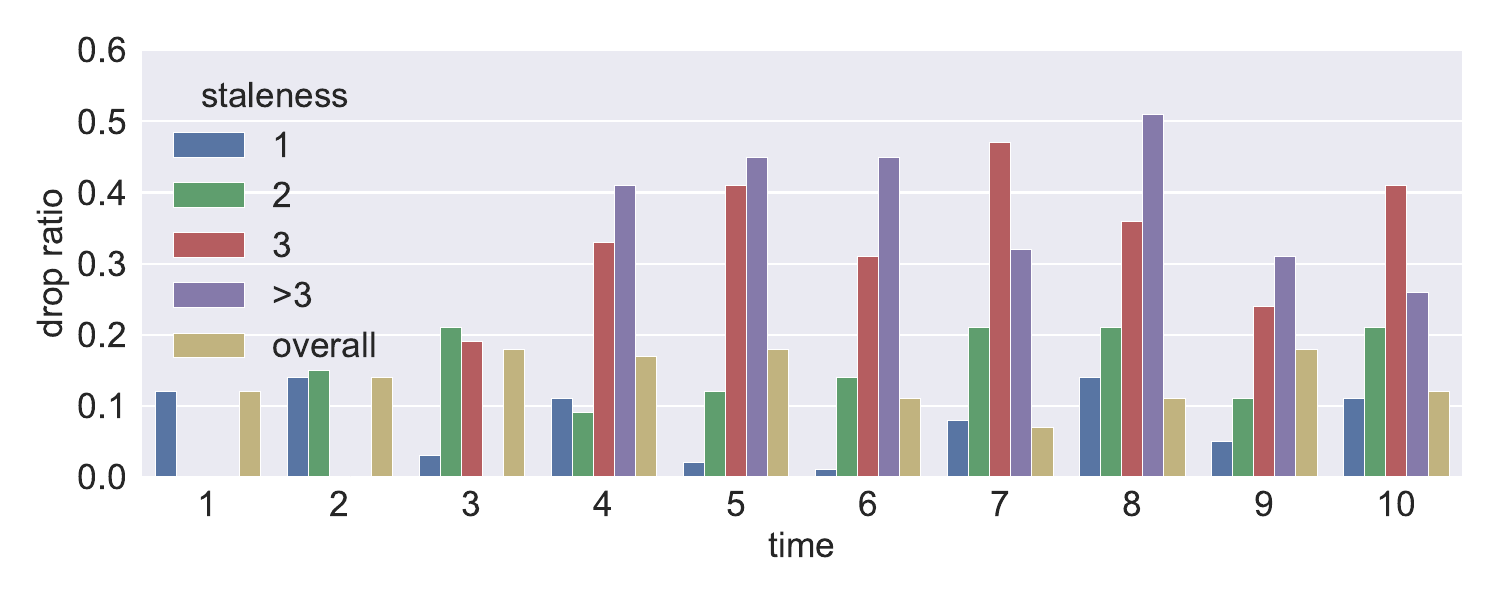}
	
		\vspace{-0.18in}
	\caption{The drop ratios of feature groups with different staleness on iPinYou.}
	\label{casestudy:ipinyou}
		
        %\vspace{-0.14in}
\end{figure}

%\begin{figure}	
%
%\vspace{-0.03in}
%	\centering
%	
%	\begin{subfigure}[b]{0.24\textwidth}
%	\centering
%	\includegraphics[width=\textwidth]{casestudy3.pdf}
%	\caption{Avazu}
%	\label{casestudy:avazu}
%	\end{subfigure}
%	\hspace{-2.5mm}
%	\begin{subfigure}[b]{0.24\textwidth}
%	\centering
%	\includegraphics[width=\textwidth]{casestudy2.pdf}
%	\caption{ Media}
%	\label{casestudy:weixin}
%	\end{subfigure}
%%	\vspace{-0.14in}
%	\caption{The number of features with different staleness on Avazu and Media.}
%		
%	\vspace{-0.14in}
%\end{figure}

%\subsubsection{Effectiveness of SAS.}
We perform a case study on iPinYou to illustrate the effectiveness of SAS in selecting samples. As presented in Figure~\ref{casestudy:ipinyou}, we have two observations.
First, the overall feature drop ratio after SAS is around 10\%, which is consistent with the approximation ratio given by Algorithm~\ref{algsas}.
Second, the features with larger staleness have a higher drop ratio in most time spans except in the first several time spans, which do not have features with staleness larger than its time span index. It means SAS prefers to choose features with smaller staleness, which is consistent with our motivation and mitigates the low-frequency features' influence on performance.

%\subsubsection{The Staleness Extent of Dataset}
%\section{related work}
%\subsection{CTR}
%weakness: is not adaptive for incremental scenario
%
%\subsection{incremental learning for recommendation}
%Most works redesign the model for the incremental problem, like SML, ASMG(meta learning), IncRec(distillation), IGCN(GNN). 
%
%However, not many work design a historical data sampling method for CTR problems, incRec use randomly reply, ADER and MAN is focus on SR recommendation, which is hard to adapt to CTR. 
%weakness: is for session based or no data sampling method

\section{Conclusion}
In this paper, we show the feature staleness problem existing in CTR prediction %and propose a feature staleness aware incremental learning method to mitigate this problem.
and propose a feature staleness aware IL method for CTR prediction~(FeSAIL) to mitigate this problem by adaptively replaying samples containing stale features.
We first introduce a feature staleness aware sampling algorithm~(SAS) to guarantee sampling efficiency. We then introduce a feature staleness aware regularization mechanism~(SAR) to restrict the low-frequency feature updating. %Furthermore, we enhance SAS and SAR to handle staleness imbalance for CTR prediction with multi-categorial feature fields.
We conduct extensive experiments and demonstrate that FeSAIL can efficiently tackle the feature staleness problem. On average, it achieves 1.21\% AUC improvement compared to various state-of-the-art methods on three widely-used public datasets and a private dataset. %with multi-categorial feature fields.
%Extensive experiments on four real datasets demonstrate the proposed model outperforms various state-of-the-art methods on three widely-used public datasets.%%
%% The next two lines define the bibliography style to be used, and
%% the bibliography file.

\section*{Acknowledgements}
This work is supported by the National Key Research and Development Program of China (2022YFE0200500), Shanghai Municipal Science and Technology Major Project (2021SHZDZX0102), the Tencent Wechat Rhino-Bird Focused Research Program, and SJTU Global Strategic Partnership Fund (2021SJTU-HKUST). Yanyan Shen is the corresponding author.

\bibliographystyle{named}
\bibliography{ijcai23}

\begin{thebibliography}{}

\bibitem[\protect\citeauthoryear{Barrett \bgroup \em et al.\egroup }{2017}]{reservior5}
Rick Barrett, Rick Cummings, and Eugene Agichtein.
\newblock {Streaming Recommender Systems}.
\newblock In {\em IJCAI}, pages 381--389, 2017.

\bibitem[\protect\citeauthoryear{Caverlee \bgroup \em et al.\egroup }{2020}]{tisasrec}
James Caverlee, Xia~"Ben" Hu, Mounia Lalmas, Wei Wang, Jiacheng Li, Yujie Wang, and Julian McAuley.
\newblock {Time Interval Aware Self-Attention for Sequential Recommendation}.
\newblock {\em WSDM}, pages 322--330, 2020.

\bibitem[\protect\citeauthoryear{Chekuri and Kumar}{2004}]{mcp}
Chandra Chekuri and Amit Kumar.
\newblock Maximum coverage problem with group budget constraints and applications.
\newblock In {\em Approximation, Randomization, and Combinatorial Optimization. Algorithms and Techniques}, pages 72--83. Springer, 2004.

\bibitem[\protect\citeauthoryear{Chen \bgroup \em et al.\egroup }{2013}]{reservior6}
Chen Chen, Hongzhi Yin, Junjie Yao, and Bin Cui.
\newblock {TeRec: a temporal recommender system over tweet stream}.
\newblock In {\em RecSys}, volume~6, pages 1254--1257, 2013.

\bibitem[\protect\citeauthoryear{Cheng \bgroup \em et al.\egroup }{2020}]{cheng}
Weiyu Cheng, Yanyan Shen, and Linpeng Huang.
\newblock Adaptive factorization network: Learning adaptive-order feature interactions.
\newblock In {\em AAAI}, volume~34, pages 3609--3616, 2020.

\bibitem[\protect\citeauthoryear{d'Aquin \bgroup \em et al.\egroup }{2020}]{tien}
Mathieu d'Aquin, Stefan Dietze, Claudia Hauff, Edward Curry, Philippe~Cudre Mauroux, Xiang Li, Chao Wang, Bin Tong, Jiwei Tan, Xiaoyi Zeng, and Tao Zhuang.
\newblock {Deep Time-Aware Item Evolution Network for Click-Through Rate Prediction}.
\newblock In {\em Proceedings of the 29th ACM International Conference on Information \& Knowledge Management}, pages 785--794, 2020.

\bibitem[\protect\citeauthoryear{Diaz-Aviles \bgroup \em et al.\egroup }{2012}]{rmfx}
Ernesto Diaz-Aviles, Lucas Drumond, Lars Schmidt-Thieme, and Wolfgang Nejdl.
\newblock {Real-time top-n recommendation in social streams}.
\newblock In {\em WWW}, pages 59--66. 2012.

\bibitem[\protect\citeauthoryear{Diederik P.~Kingma}{2015}]{adam}
Jimmy Lei~Ba Diederik P.~Kingma.
\newblock {Adam: A Method for Stochastic Optimization}.
\newblock In {\em ICLR}, pages 785--794, 2015.

\bibitem[\protect\citeauthoryear{Guo \bgroup \em et al.\egroup }{2017}]{deepfm}
Huifeng Guo, Ruiming Tang, Yunming Ye, Zhenguo Li, and Xiuqiang He.
\newblock {DeepFM: A Factorization-Machine based Neural Network for CTR Prediction}.
\newblock {\em arXiv}, abs/1703.04247, 2017.

\bibitem[\protect\citeauthoryear{Guo \bgroup \em et al.\egroup }{2018}]{din}
Yike Guo, Faisal Farooq, Guorui Zhou, Xiaoqiang Zhu, Chenru Song, Ying Fan, Han Zhu, Xiao Ma, Yanghui Yan, Junqi Jin, Han Li, and Kun Gai.
\newblock {Deep Interest Network for Click-Through Rate Prediction}.
\newblock In {\em AAAI}, pages 1059--1068, 2018.

\bibitem[\protect\citeauthoryear{He and McAuley}{2016}]{fossil}
Ruining He and Julian McAuley.
\newblock {Fusing Similarity Models with Markov Chains for Sparse Sequential Recommendation}.
\newblock {\em arXiv}, 2016.

\bibitem[\protect\citeauthoryear{Huang \bgroup \em et al.\egroup }{2020a}]{sml}
Jimmy Huang, Yi~Chang, and Xueqi Cheng.
\newblock {How to Retrain Recommender System? A Sequential Meta-Learning Method}.
\newblock In {\em Proceedings of the 43rd International ACM SIGIR Conference on Research and Development in Information Retrieval}, pages 1479--1488, 2020.

\bibitem[\protect\citeauthoryear{Huang \bgroup \em et al.\egroup }{2020b}]{gag}
Jimmy Huang, Yi~Chang, and Xueqi~and Cheng.
\newblock {GAG: Global Attributed Graph Neural Network for Streaming Session-based Recommendation}.
\newblock In {\em SIGIR}, pages 669--678, 2020.

\bibitem[\protect\citeauthoryear{Kang and McAuley}{2018}]{sasrec}
Wang-Cheng Kang and Julian McAuley.
\newblock {Self-Attentive Sequential Recommendation}.
\newblock {\em ICDM}, pages 197--206, 2018.

\bibitem[\protect\citeauthoryear{Kirkpatrick \bgroup \em et al.\egroup }{2016}]{ewc}
James Kirkpatrick, Razvan Pascanu, Neil Rabinowitz, Joel Veness, Guillaume Desjardins, Andrei~A Rusu, Kieran Milan, John Quan, Tiago Ramalho, Agnieszka Grabska-Barwinska, Demis Hassabis, Claudia Clopath, Dharshan Kumaran, and Raia Hadsell.
\newblock {Overcoming catastrophic forgetting in neural networks}.
\newblock {\em arXiv}, 2016.

\bibitem[\protect\citeauthoryear{Krishnapuram \bgroup \em et al.\egroup }{2016}]{deepcrossing}
Balaji Krishnapuram, Mohak Shah, and JC~Mao.
\newblock {Deep Crossing: Web-Scale Modeling without Manually Crafted Combinatorial Features}.
\newblock In {\em KDD}, pages 255--262, 2016.

\bibitem[\protect\citeauthoryear{Lim \bgroup \em et al.\egroup }{2017}]{narm}
Ee-Peng Lim, Marianne Winslett, Mark Sanderson, Ada Fu, Jimeng Sun, and Shane Culpepper.
\newblock {Neural Attentive Session-based Recommendation}.
\newblock {\em CIKM}, pages 1419--1428, 2017.

\bibitem[\protect\citeauthoryear{Mi and Faltings}{2020}]{man}
Fei Mi and Boi Faltings.
\newblock {Memory Augmented Neural Model for Incremental Session-based Recommendation}.
\newblock In {\em Proceedings of the Twenty-Ninth International Joint Conference on Artificial Intelligence}, pages 2169--2176, 2020.

\bibitem[\protect\citeauthoryear{Mi \bgroup \em et al.\egroup }{2020}]{ader}
Fei Mi, Xiaoyu Lin, and Boi Faltings.
\newblock {ADER: Adaptively Distilled Exemplar Replay Towards Continual Learning for Session-based Recommendation}.
\newblock {\em arXiv}, 2020.

\bibitem[\protect\citeauthoryear{Peng \bgroup \em et al.\egroup }{2021}]{asmg}
Danni Peng, Sinno~Jialin Pan, Jie Zhang, and Anxiang Zeng.
\newblock {Learning an Adaptive Meta Model-Generator for Incrementally Updating Recommender Systems}.
\newblock In {\em SIGIR}, pages 411--421, 2021.

\bibitem[\protect\citeauthoryear{Qu \bgroup \em et al.\egroup }{2016}]{ipinyou1}
Yanru Qu, Han Cai, Kan Ren, Weinan Zhang, Yong Yu, Ying Wen, and Jun Wang.
\newblock Product-based neural networks for user response prediction.
\newblock In {\em 2016 IEEE 16th International Conference on Data Mining (ICDM)}, pages 1149--1154. IEEE, 2016.

\bibitem[\protect\citeauthoryear{Rendle \bgroup \em et al.\egroup }{2010}]{fpmc}
Steffen Rendle, Christoph Freudenthaler, and Lars Schmidt-Thieme.
\newblock {Factorizing personalized Markov chains for next-basket recommendation}.
\newblock {\em WWW}, pages 811--820, 2010.

\bibitem[\protect\citeauthoryear{Sun \bgroup \em et al.\egroup }{2019}]{bert4rec}
Fei Sun, Jun Liu, Jian Wu, Changhua Pei, Xiao Lin, Wenwu Ou, and Peng Jiang.
\newblock {BERT4Rec: Sequential Recommendation with Bidirectional Encoder Representations from Transformer}.
\newblock In {\em Proceedings of the 28th ACM International Conference on Information \& Knowledge Management}, 2019.

\bibitem[\protect\citeauthoryear{Sun \bgroup \em et al.\egroup }{2022}]{sun2022self}
Li~Sun, Junda Ye, Hao Peng, Feiyang Wang, and Philip~S Yu.
\newblock Self-supervised continual graph learning in adaptive riemannian spaces.
\newblock In {\em IJCAI}, 2022.

\bibitem[\protect\citeauthoryear{Teredesai \bgroup \em et al.\egroup }{2019}]{ssr}
Ankur Teredesai, Vipin Kumar, and Ying Li.
\newblock {Streaming Session-based Recommendation}.
\newblock In {\em IJCAI}, pages 1569--1577, 2019.

\bibitem[\protect\citeauthoryear{Wang \bgroup \em et al.\egroup }{2017}]{dcn}
Ruoxi Wang, Bin Fu, Gang Fu, and Mingliang Wang.
\newblock {Deep \& Cross Network for Ad Click Predictions}.
\newblock In {\em ADKDD}, pages 1--7, 2017.

\bibitem[\protect\citeauthoryear{Wang \bgroup \em et al.\egroup }{2020}]{incctr}
Yichao Wang, Huifeng Guo, Ruiming Tang, Zhirong Liu, and Xiuqiang He.
\newblock {A Practical Incremental Method to Train Deep CTR Models}.
\newblock {\em arXiv}, 2020.

\bibitem[\protect\citeauthoryear{Zhang \bgroup \em et al.\egroup }{2014}]{ipinyou}
Weinan Zhang, Shuai Yuan, Jun Wang, and Xuehua Shen.
\newblock {Real-Time Bidding Benchmarking with iPinYou Dataset}.
\newblock {\em arXiv}, 7 2014.

\bibitem[\protect\citeauthoryear{Zhikai and Yanyan}{2022}]{tamic}
Wang Zhikai and Shen Yanyan.
\newblock Time-aware multi-interest capsule network for sequential recommendation.
\newblock In {\em SDM}, pages 558--566, 2022.

\bibitem[\protect\citeauthoryear{Zhikai and Yanyan}{2023}]{imsr}
Wang Zhikai and Shen Yanyan.
\newblock Incremental learning for multi-interest sequential recommendation.
\newblock In {\em ICDE}, 2023.

\bibitem[\protect\citeauthoryear{Zhou \bgroup \em et al.\egroup }{2018}]{dien}
Guorui Zhou, Na~Mou, Ying Fan, Qi~Pi, Weijie Bian, Chang Zhou, Xiaoqiang Zhu, and Kun Gai.
\newblock {Deep Interest Evolution Network for Click-Through Rate Prediction}.
\newblock In {\em AAAI}, pages 5941--5948, 2018.

\bibitem[\protect\citeauthoryear{Zhu \bgroup \em et al.\egroup }{2019}]{fnn}
Wenwu Zhu, Dacheng Tao, Xueqi Cheng, Peng Cui, Elke Rundensteiner, David Carmel, Qi~He, Jeffrey~Xu Yu, Zekun Li, Zeyu Cui, Shu Wu, Xiaoyu Zhang, and Liang Wang.
\newblock {Fi-GNN: Modeling Feature Interactions via Graph Neural Networks for CTR Prediction}.
\newblock {\em Proceedings of the 28th ACM International Conference on Information and Knowledge Management}, pages 539--548, 11 2019.

\end{thebibliography}

% \bibliography{icde,IEEEexample, incremental,ref2,ijcai22}
\end{document}